\newtheorem{theorem}{Theorem}
\newtheorem{lemma}{Lemma}
\newtheorem{op}{Open Problem}
\newtheorem{corollary}{Corollary}
\newcommand{\EXP}{\mathrm{E}}
\newcommand{\R}{\mathbb{R}}
\newcommand{\etal}{\emph{et al}}
\newcommand{\ccw}{\mathrm{ccw}}
\newcommand{\cw}{\mathrm{cw}}
\title{\MakeUppercase{Memoryless Routing in Convex Subdivisions: \newline 
       Random Walks are Optimal}}
\author{Dan Chen, Luc Devroye, Vida Dujmovi\'c, and Pat Morin}
\begin{document}
\maketitle

\begin{abstract}
A \emph{memoryless} routing algorithm is one in which the decision
about the next edge on the route to a vertex $t$ for a packet currently
located at vertex $v$ is made based only on the coordinates of $v$, $t$,
and the neighbourhood, $N(v)$, of $v$.  The current paper explores the
limitations of such algorithms by showing that, for any (randomized)
memoryless routing algorithm $\mathcal{A}$, there exists a convex
subdivision on which $\mathcal{A}$ takes $\Omega(n^2)$ expected
time to route a message between some pair of vertices.  Since this
lower bound is matched by a random walk, this result implies that the
geometric information available in convex subdivisions is not helpful
for this class of routing algorithms.  The current paper also shows
the existence of triangulations for which the \textsc{Random-Compass}
algorithm proposed by Bose \etal\ (2002,2004) requires $2^{\Omega(n)}$
time to route between some pair of vertices.
\end{abstract}

\section{Introduction}
\label{sec:intro}

In recent years, motivated primarily by the proliferation of wireless networks and GPS devices, much research has been done on routing algorithms for geometric networks \cite{gior03}.  In this research a network is modelled as a geometric graph $G=(V,E)$ whose vertex set $V$ is a set of points in $\R^2$. We say that a routing algorithm $\mathcal{A}$ \emph{works} for $G$ if, for any pair of vertices $s,t\in V$, the algorithm always find a path from $s$ to $t$ in a finite number of steps.

The research on geometric routing algorithms largely focuses on utilizing geometric properties of a class of geometric graphs to reduce the complexity of, and information required by, routing algorithms.  For example, when $G$ is the unit disk graph\footnote{The \emph{unit disk graph} of a point set $V$ contains the edge $uv$ if and only if the Euclidean distance between $u$ and $v$ is at most 1.} of the points in $V$, then an algorithm, called \textsc{Face-1}, of Bose \etal\ \cite{bose01} (see also Karp and Kung \cite{kk00}) works for $G$ and requires no preprocessing of $G$ or additional state information at the vertices of $G$ and requires only a constant size header associated with each packet.  An extremely general result in this vein, based on logspace construction of universal exploration sequences, shows that, using a header containing only $O(\log n)$ bits, one can visit all the vertices of any graph (and hence reach $t$) in a polynomial number of steps \cite{b08}.

A particularly interesting and restricted class of routing algorithms are so-called memoryless routing algorithms.  A \emph{memoryless} routing algorithm is one in which the decision about the next edge on the route to $t$ for a packet currently located at node $v$ is based only on the coordinates of $v$, $t$, and the neighbourhood, $N(v)$, of $v$. More precisely, a deterministic memoryless routing algorithm is a function $f:\R^2\times\R^2\times(\R^2)^+\rightarrow \R^2$ that satisfies $f(v,t,N(v)) \in N(v)$ and $f(t,t,N(t)) = t$ for all inputs.

Note that a memoryless routing algorithm makes each routing step without using information obtained in previous routing steps and without any global information about $G$.  Memoryless algorithms are different from \emph{oblivious} routing algorithm \cite[Section~4.2]{mr95} which select a path from $s$ to $t$ having total knowledge of $G$ (but without knowledge of other source/destination pairs).
 
Bose and Morin \cite{bose04} show that if $G$ is Delaunay triangulation\footnote{A \emph{triangulation} is a geometric graph all of whose faces, except the outer face, are triangles, and whose outer face is the complement of a triangle. Delaunay triangulations and regular triangulations are special types of triangulations. For details, consult \cite{obs92}.} or a regular triangulation then deterministic memoryless routing algorithms, named \textsc{Greedy} and \textsc{Compass}, respectively, work for $G$. Bose \etal\ \cite{bose02} subsequently show a stronger result; a deterministic memoryless routing algorithm, named \textsc{Greedy-Compass}, works for any triangulation $G$.

Memoryless routing algorithms are so simple, elegant, and practical that researchers have spent considerable effort designing geometric embeddings of graphs so that memoryless routing algorithms can be applied to the resulting embeddings.  A famous example in this vein is due to Leighton and Moitra \cite{lm08} who prove that every 3-connected planar graph $\tilde G$ admits an embedding $G$ in $\R^2$ such that \textsc{Greedy} works on $G$.  The combination of the embedding and routing algorithm represents a form of \emph{compact routing} \cite{l94}.

Unfortunately, deterministic memoryless routing algorithms have severe limitations.  These stem from the fact that these algorithms can not visit the same vertex more than once without looping forever.  Bose \etal\ \cite[Theorem~2]{bose04} show that there exists 17 convex subdivisions\footnote{A \emph{convex subdivision} is a geometric graph all of whose faces, except the outer face, are convex polygons, and whose outer face is the complement of a convex polygon.}, $G_1,\ldots,G_{17}$, each with 17 vertices such that any deterministic memoryless routing algorithm does not work for at least one of these subdivisions.  Thus, convex subdivisions form a class of geometric graphs that are too rich for deterministic memoryless routing algorithms \cite{bose02}.

The same authors \cite{bose02,bose04} observe that randomization can be used to overcome this limitation.  A \emph{randomized memoryless} routing algorithm is one in which the decision about the next edge on the route to $t$ for a packet currently located at node $v$ is based only on $v$, $t$, the neighbourhood, $N(v)$, of $v$, and a sequence $B$ of fresh random bits.  More precisely, a randomized memoryless routing algorithm is defined by a function $f:\R^2\times\R^2\times(\R^2)^+\times\{0,1\}^\infty\rightarrow \R^2$ that satisfies $f(v,t,N(v),B) \in N(v)$ and $f(t,t,N(v),B) = t$ for all inputs.  The final argument $B$ is a sequence of random bits that are chosen fresh for each step taken by the routing algorithm. Bose \etal\ describe a randomized memoryless algorithm, named \textsc{Random-Compass}, that uses one random bit per step works for any convex subdivision. They do not analyze the efficiency of \textsc{Random-Compass} except to note that, for some convex subdivisions $G$, and some pairs $s,t\in V$, the expected number of steps taken by \textsc{Random-Compass} when routing from $s$ to $t$ is $\Omega(|V|^2)$.  

Observe that, by the theory of random walks (c.f. \cite[Theorem~6.6]{mr95}), the expected time required for a random walk on $G$ to travel from a particular vertex $s$ to a particular vertex $t$ is $O(n^2)$.  Therefore, a random walk is at least as efficient, in the worst case, as the \textsc{Random-Compass} algorithm. Nevertheless, one might expect that \textsc{Random-Compass} is more likely to find short routes, since it uses geometry to find a route that is specifically directed towards the target vertex $t$.  Thus, we might intuit that \textsc{Random-Compass} is a heuristic that is usually better than a random walk and never much worse.

In the current paper, we show that this intuition about \textsc{Random-Compass} could not be further from the truth.  Indeed, for any $n>0$, there exists a convex subdivision (in fact, a triangulation) $G$ with $n$ vertices and having two vertices $s$ and $t$ such that the expected number of steps taken by \textsc{Random-Compass} when routing from $s$ to $t$ is $2^{\Omega(n)}$. This triangulation has diameter 3.

Next we study whether \emph{any} randomized memoryless routing algorithm for convex subdivisions can outperform a random walk.  We show that, for any randomized memoryless routing algorithm $\mathcal{A}$ and any $n$, there exists a convex subdivision $G=G(\mathcal{A})=(V,E)$ of size $n$ and a pair of vertices $s,t\in V$ such that the expected number of steps taken by $\mathcal{A}$ when routing from $s$ to $t$ is $\Omega(n^2)$.  Therefore, at least in the worst-case, no algorithm significantly outperforms a random walk.

\section{A Bad Example for Random-Compass}

The \textsc{Random-Compass} algorithm works by using a coin toss to select among the (at most two) neighbours $\ccw_t(v)$ and $\cw_t(v)$ of the current node $v$ that make the minimum and maximum angle, respectively, with the segment $vt$ (see Figure~\ref{fig:random-compass}.a). When applied on a convex subdivision $G=(V,E)$, Bose \etal\ show that, in the directed graph $G'$ that contains the edges $(v,\cw_t(v))$ and $(v,\ccw_t(v))$ for all $v\in V$, there exists at least one directed path $P(v,t)$ from every vertex $v$ to $t$ (see Figure~\ref{fig:random-compass}.b).  This, and Wald's Equation, immediately imply that the expected time to reach $t$ from any vertex is at most $2^{n}$; from any vertex $v$, \textsc{Random-Compass} has probability at least $1/2^{|P(v,t)|}\ge 1/2^{n-1}$ of reaching $t$ by following $P(v,t)$, and the expected number of steps it takes on $P(v,t)$ before falling off $P(v,t)$ is at most 2.

\begin{figure}
  \begin{center}
    \begin{tabular}{cc}
    \includegraphics{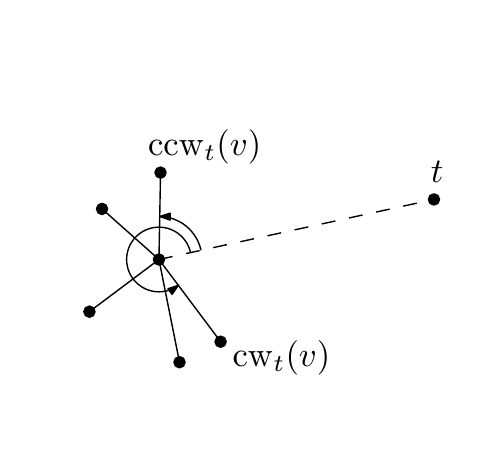} &
    \includegraphics{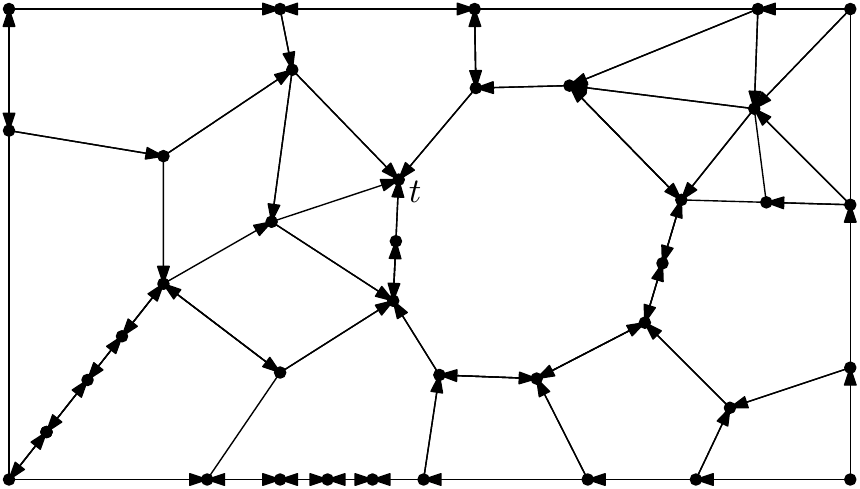} \\
    (a) & (b)
    \end{tabular}
  \end{center}
  \caption{The \textsc{Random-Compass} algorithm chooses the next vertex at random among $\ccw_t(v)$ and $\cw_t(v)$.}
  \label{fig:random-compass}
\end{figure}

The example in Figure~\ref{fig:bad-unbiased} shows that the above analysis of \textsc{Random-Compass}, although very coarse, is about the best one can do. It shows a geometric graph $G$ whose vertex set has size $n=4k+1$ and whose vertices are organized as a central vertex $t$ and four paths leading from the outer face to $t$.  The space between these paths is triangulated so that, at any point, \textsc{Random-Compass} chooses between an edge that leads one step closer to $t$ or that returns to the outer face.

\begin{figure}
  \begin{center}
    \includegraphics{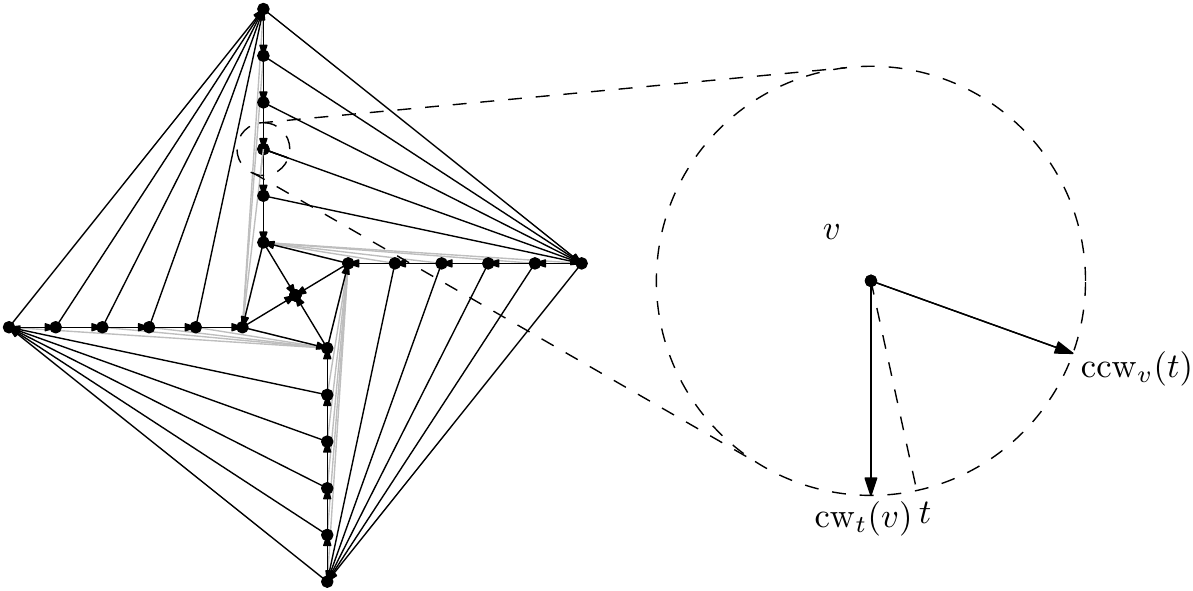}
  \end{center}
  \caption{A graph in which \textsc{Random-Compass} has expected running time 
           $\Omega(2^{n/4})$.}
  \label{fig:bad-unbiased}
\end{figure}

If we consider the directed graph $G'$ defined above, then we see that, at any point the packet is at some distance $i$ from $t$ and that, it can, with equal probability, move to a vertex  of distance $i-1$ or move to a vertex (on the outer face) of distance $k$.  If we denote by $T_i$ the expected number of steps required by \textsc{Random-Compass} to reach $t$ given that it is currently at distance $i$ from $t$, we see that 
\[ 
  T_i = \left\{ \begin{array}{ll}
          0 & \mbox{for $i=0$} \\
          (1/2)T_{i-1} + (1/2)T_k & \mbox{for $i\in\{1,\ldots,k\}$} 
        \end{array}\right.
\]
Expanding the value of $T_k$ gives
\begin{eqnarray*}
  T_k & = & 1 + (1/2)T_k + (1/2)T_{k-1} \\
      & = & 1 + (1/2)T_k + 1/2 + (1/4)T_{k} + (1/4)T_{k-2} \\
      & = & 1 + (1/2)T_k + 1/2 + (1/4)T_{k} + (1/4)
              + \cdots  + (1/2^{k-1}) + (1/2^{k})T_{k} + (1/2^{k})T_0 \\
      & = & 2-1/2^{k} + (1-1/2^k)T_k \enspace ,
\end{eqnarray*}
and rewriting this gives $T_k = 2^k(2-1/2^{k}) = \Omega(2^{n/4})$.  This proves:

\begin{theorem}\label{theorem:random-compass}
For any $n>1$, there exists a triangulation $G$ having two vertices $s$ and $t$ such that the expected number of steps taken by $\textsc{Random-Compass}$ when routing from $s$ to $t$ is $2^{\Omega(n)}$.
\end{theorem}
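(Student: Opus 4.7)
The plan is to construct an explicit triangulation $G$ on $n$ vertices in which \textsc{Random-Compass} is forced, at every step along a certain backbone, into a fair coin flip between making exactly one unit of progress toward $t$ and teleporting back to the start. If I can arrange this then the distance-to-$t$ process performs a Markov chain whose hitting time at $t$ is easily computed to be exponential in the length of the backbone.

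More concretely, I would place $t$ at the origin and embed four disjoint paths $P_1,P_2,P_3,P_4$ emanating in four different directions from the outer face and terminating at $t$, each of length $k=\Theta(n)$. The four gaps between consecutive paths (in cyclic order around $t$) are then filled with triangles. The coordinates are chosen so that for every interior backbone vertex $v$ at distance $i\ge 1$ from $t$, the two neighbors with minimum and maximum angles with $\overrightarrow{vt}$ (that is, $\ccw_t(v)$ and $\cw_t(v)$) are, respectively, the next backbone vertex at distance $i-1$ and a vertex on the outer face (hence at distance $k$ from $t$). Since \textsc{Random-Compass} flips a fair coin between these two choices, the distance-to-$t$ process is a Markov chain on $\{0,1,\ldots,k\}$ whose expected hitting time $T_i$ at $0$ starting from $i$ satisfies $T_0 = 0$ and $T_i = 1 + (1/2)T_{i-1} + (1/2)T_k$ for $1 \le i \le k$.

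Solving this recurrence (by iterating the expression for $T_k$ and collecting the geometric series) yields $T_k = 2^k(2 - 1/2^k) = 2^{\Omega(n)}$. Taking $s$ to be any outer-face vertex at distance $k$ from $t$ then completes the proof, since $n = 4k + 1$ and the triangulation has $n$ vertices.

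I expect the main obstacle to lie entirely in the geometric realization step: the coordinates of the four paths and the triangulation of the gaps must be engineered so that, simultaneously at every interior vertex $v$, the angular extrema of $N(v)$ with respect to $\overrightarrow{vt}$ are precisely the progress-neighbor and a reset vertex on the outer face, with no triangulation edge accidentally becoming an angular extremum. A symmetric layout that places the four paths along the coordinate axes, with vertices placed at geometrically decreasing distances from the origin, and that triangulates each gap as a fan from its backbone vertices to the corresponding outer-face vertex, should make this combinatorial check both routine and symmetric across the four quadrants.
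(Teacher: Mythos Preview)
Your proposal is correct and follows essentially the same approach as the paper: the same four-path-to-$t$ construction with $n=4k+1$ vertices, the same triangulation of the gaps so that $\ccw_t(v)$ and $\cw_t(v)$ are always a progress-neighbor and an outer-face reset vertex, the identical recurrence $T_i = 1 + \tfrac12 T_{i-1} + \tfrac12 T_k$, and the same closed-form $T_k = 2^{k+1}-1$. The paper relies on a figure for the geometric realization rather than spelling out coordinates, so your explicit remarks about the layout are, if anything, slightly more detailed than what appears there.
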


Note that the base in the exponent can be improved by using a construction with 3 paths instead of 4.  In this case, the lower bound becomes $\Omega(2^{n/3})$.  Furthermore, up to a factor of 2, the lower bound on Theorem~\ref{theorem:random-compass} holds for all choices of the source vertex $s$ since, for any vertex $s\neq t$, the expected time to route from $s$ to $t$ is at least $(1/2)T_k$.

\section{A Lower Bound for Any Algorithm}

In this section we develop an $\Omega(n^2)$ lower bound for routing on convex subdivisions using any randomized memoryless routing algorithm $\mathcal{A}$.  The outline of the lower bound is as follows:  We start with a lemma about Markov chains whose transition graphs are paths. We show that, when starting at the midpoint of the path, there is at most one endpoint of the path that can be reached in subquadratic expected time.  This lemma is relevant since, if $\mathcal{A}$ finds itself in the interior of a path of degree 2 vertices in $G$, it will behave like such a Markov chain until it reaches one of the endpoints of this path.

Next, we observe how $\mathcal{A}$ behaves on certain paths of degree 2 vertices and show that, because $\mathcal{A}$ can only reach one endpoint of any path in subquadratic time, that we can always find a subset of these paths that can be pieced together to form a convex subdivision in which $\mathcal{A}$ takes at least quadratic expected time to route from some vertex $s$ to some vertex $t$.

\subsection{Markov Chains}
\label{sec:markov}

Consider a Markov chain on $\{ 1, \ldots, n \}$, $n > 1$, where transitions only
take place between neighbors. If $p_{i,j}$ is the probability of a transition
from $i$ to $j$, then we have 
\[
p_{1,2} = p_{n,n-1} = 1,
\]
\[
p_{i,i+1} = 1-p_{i,i-1} = \pi_i, 2 \le i \le n-1,
\]
where $\pi_2, \ldots, \pi_{n-1}$ are fixed probabilities.
The vector of these probabilities is denoted by $\pi$.
We will set $\pi_1 = 1$, $\pi_n = 0$, to be consistent, as the
extreme states are reflecting.  When $\pi_i = 1/2$ for $2 \le i \le n-1$,
we obtain a standard random walk on a finite interval with reflecting barriers.

We denote the Markov chain by $X_0, X_1, \ldots, X_t , \ldots$,
and denote the hitting times by $T_{i,j}$:
\[
T_{i,j} = \min \{ t > 0: X_t = j | X_0 = i \}.
\]
For a standard random walk, it is known that
\[
\EXP \{ T_{i,j} \} = (j-i)^2, j \not= i, 1 \le j,i \le n
\]
\cite{m73}.  The standard random walk is in fact the best possible chain in the following sense:

\begin{lemma}\label{lemma:backandforth}
For any vector of probabilities $\pi$, and any $n > 1$,
\[
\EXP \{ T_{1,n} + T_{n,1} \} \ge 2 (n-1)^2.
\]
\end{lemma}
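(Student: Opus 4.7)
The plan is to recognise the birth--death chain as a random walk on a weighted path and then reduce the lemma to Cauchy--Schwarz. First I would dispose of the degenerate cases: if $\pi_i=0$ for some interior $i$ then, starting from $1$, the chain can never pass $i$ and $T_{1,n}=\infty$, while $\pi_i=1$ forces $T_{n,1}=\infty$; in either case the inequality is vacuous. So I may assume $\pi_i\in(0,1)$ for all $2\le i\le n-1$, which makes the chain irreducible and reversible.

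Next I would introduce an edge conductance $w_i$ on the edge $\{i,i+1\}$ of the path on vertices $\{1,\dots,n\}$, defined by $w_1=1$ and $w_i = w_{i-1}\cdot \pi_i/(1-\pi_i)$ for $2\le i\le n-1$. The random walk that chooses a neighbour of $i$ with probability proportional to the incident conductance has transition probability $w_i/(w_{i-1}+w_i)=\pi_i$, so it coincides with the given chain. By the commute-time identity for reversible random walks on weighted graphs (Chandra, Raghavan, Ruzzo, Smolensky and Tiwari),
\[
  \EXP\{T_{1,n}+T_{n,1}\} \;=\; 2\Bigl(\sum_{i=1}^{n-1} w_i\Bigr)\,R_{\mathrm{eff}}(1,n),
\]
and since the path is a series of edges of resistance $1/w_i$, we have $R_{\mathrm{eff}}(1,n)=\sum_{i=1}^{n-1}1/w_i$. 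Cauchy--Schwarz then gives $\bigl(\sum w_i\bigr)\bigl(\sum 1/w_i\bigr)\ge(n-1)^2$, which completes the proof.

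If a self-contained version is preferred, I would instead solve the one-step recurrences for $h_i:=\EXP\{T_{i,i+1}\}$ and $g_i:=\EXP\{T_{i+1,i}\}$ directly, and show by induction that $h_i=1+2(w_1+\cdots+w_{i-1})/w_i$ and $g_i=1+2(w_{i+1}+\cdots+w_{n-1})/w_i$; summing over $i$ collapses $\sum_i(h_i+g_i)$ to $2\bigl(\sum w_i\bigr)\bigl(\sum 1/w_i\bigr)$, after which Cauchy--Schwarz finishes as above. The main obstacle is spotting the right reparameterisation in terms of the $w_i$: once one notices that $\pi_i/(1-\pi_i)$ is a ratio of consecutive conductances, the product $(\sum w_i)(\sum 1/w_i)$ emerges naturally and Cauchy--Schwarz is essentially forced.
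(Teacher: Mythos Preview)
Your argument is correct. Both the commute-time route and the self-contained alternative go through: the conductances $w_i$ do realise the given transition probabilities (including the reflecting endpoints), the commute-time identity of Chandra \etal\ gives $\EXP\{T_{1,n}+T_{n,1}\}=2\bigl(\sum_{i=1}^{n-1} w_i\bigr)\bigl(\sum_{i=1}^{n-1} 1/w_i\bigr)$ on a series path, and Cauchy--Schwarz finishes. Your direct recurrence formulas for $h_i$ and $g_i$ also check out, and summing $h_i+g_i$ telescopes to the same product.

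The paper proceeds differently. It never introduces the conductances $w_i$ or the electrical-network viewpoint; instead it sets $P_i=(1-\pi_i)/\pi_i$ and $Q_i=\pi_i/(1-\pi_i)$, derives closed forms for the excursion times $\EXP\{T^{\pm}_{i,i}\}$ on truncated chains, and expands $\EXP\{T_{1,n}+T_{n,1}\}$ as $2(n-1)$ plus a double sum of products $\prod_{k=j}^{i}P_k$ and $\prod_{k=j}^{i}Q_k$. Since each such pair consists of reciprocals, AM--GM is applied term by term to bound each pair below by $2$, and counting the terms yields $2(n-1)^2$. In your language $P_k=w_{k-1}/w_k$ and $Q_k=w_k/w_{k-1}$, so the two computations are reorganisations of the same sum; but your reparameterisation collapses everything to a single product and a single application of Cauchy--Schwarz, which is shorter, more conceptual, and makes the equality case ($w_i$ constant, i.e.\ $\pi_i=1/2$) immediate. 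The paper's version has the advantage of being entirely self-contained, not invoking the commute-time theorem as a black box --- though your second alternative achieves that too, and more cleanly.
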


\begin{proof}
The lemma is obviously true if any $\pi_i$, $2 \le i \le n-1$, is either
zero or one as that would imply that at least one of the hitting times is
infinite. Thus, we assume that all probabilities are strictly in $(0,1)$.
It is also trivial if $n=2$, so assume $n > 2$.
Define
\[
P_i = {1 \over \pi_i } - 1, Q_i = {1 \over 1-\pi_i} - 1,
\]
and note that $P_i Q_i = 1$.
If needed, we formally set $P_1 = Q_n = 0$.
%

We need an explicit formula for $\EXP \{ T_{1,1} \}$.
Let us introduce the chains on $\{ i, \ldots , n \}$ with
reflecting barriers at $i$ and $n$, but with the same $\pi_j$ values
associated with non-terminal states. 
Let $T^+_{i,j}$ with $j \ge i$, denote the hitting time from $i$ to $j$ in the chain $\{1,\ldots,n\}$ defined this way.
Clearly,
\[
T^+_{n-1, n-1} = 2.
\]
Next,
\[
T^+_{n-2, n-2 } = 2 + \sum_{j \le Z} W_j,
\]
where $W_j$ are independent lengths excursions from $n-1$ to  $n-1$
on the chain $\{ n-1, n \}$, and $Z$ (possibly zero) is the number of such excursions.
Obviously, $Z$ is geometrically distributed, and $\EXP \{ Z \} = Q_{n-1}$.
Because $\EXP \{ W_1 \} = \EXP \{ T^+_{n-1, n-1} \} = 2$,
and because $Z$ is a stopping time, 
we have, by Wald's identity,
\[
\EXP \{ T^+_{n-2, n-2} \} = 2 + 2 Q_{n-1}.
\]
This argument is easily extended by induction, and we obtain for $1 \le i < n-1$,
\[
\EXP \{ T^+_{i, i} \}  = 2 + Q_{i+1} \EXP \{ T^+_{i+1, i+1} \}
  = 2 \left( 1 + Q_{i+1} + Q_{i+1} Q_{i+2} + \cdots + Q_{i+1} \cdots Q_{n-1} \right).
\]
By flipping sides, and denoting by $T^-$ the hitting times for
the Markov chains on $\{ 1, \ldots i \}$
with reflecting bariers at $1$ and $i$, we obtain in a similar fashion,
for $2 < i \le n$,
\[
\EXP \{ T^-_{i, i} \}  = 2 + P_{i-1} \EXP \{ T^-_{i-1, i-1} \}
  = 2 \left( 1 + P_{i-1} + P_{i-1} P_{i-2} + \cdots + P_{i-1} \cdots P_{2} \right).
\]
Furthermore, $\EXP \{ T^-_{2, 2} \}  =2$.

With these calculations out of the way,  we note that
\[
\EXP \{ T_{i,i+1} \} = 1 + P_i \times \EXP \{ T^-_{i, i} \} \enspace ,
\]
and
\[
\EXP \{ T_{i,i-1} \} = 1 + Q_i \times \EXP \{ T^+_{i, i} \} \enspace .
\]
Clearly,
\[
\begin{aligned}
\EXP &\{  T_{1,n} + T_{n,1} \} \\
&= \sum_{i=1}^{n-1} \EXP \{ T_{i,i+1} \} + \sum_{i=2}^n \EXP \{ T_{i,i-1} \} \\
&= 2(n-1) + 2\sum_{i=2}^{n-1} \left( P_i + P_i P_{i-1} + \cdots + P_i \cdots P_{2} \right) + 2\sum_{i=2}^n \left( Q_i + Q_i Q_{i+1} + \cdots + Q_i \cdots Q_{n-1} \right) \\
&= 2(n-1) + 2\sum_{i=2}^{n-1} \sum_{j=2}^i \prod_{k=j}^i P_k + 2\sum_{i=2}^n \sum_{j=i}^{n-1} \prod_{k=i}^j Q_k  \\
&= 2(n-1) + 2\sum_{i=2}^{n-1} \sum_{j=2}^i \prod_{k=j}^i P_k + 2\sum_{j=2}^n \sum_{i=j}^{n-1} \prod_{k=j}^i Q_k  \\
&= 2(n-1) + 2\sum_{i=2}^{n-1} \sum_{j=2}^i \prod_{k=j}^i P_k + 2\sum_{i=2}^{n-1} \sum_{j=2}^i \prod_{k=j}^i Q_k  \\
&= 2(n-1) + 2\sum_{i=2}^{n-1} \sum_{j=2}^i \left( \prod_{k=j}^i P_k + \prod_{k=j}^i Q_k \right)  \\
&\ge 2(n-1) + 4\sum_{i=2}^{n-1} \sum_{j=2}^i \sqrt{ \prod_{k=j}^i P_k \times \prod_{k=j}^i Q_k } \\
&\qquad \hbox{\rm (by the arithmetic-geometric mean ineqality)} \\
&= 2(n-1) + 4\sum_{i=2}^{n-1} \sum_{j=2}^i  1 \\
&\qquad \hbox{\rm (since $P_iQ_i = 1$ for all $i$ in our range)} \\
&= 2(n-1) + 4\sum_{i=1}^{n-1} (i-1) \\
&= 2(n-1) + 2 n (n-1) - 4 (n-1) \\
&= 2(n-1)^2,\\
\end{aligned}
\]
which concludes the proof.
\end{proof}

Next we present a simple corollary of Lemma~\ref{lemma:backandforth} that is used in our lower bound.

\begin{corollary}\label{cor:markov}
Consider a random walk with reflecting barriers on $\{ -n, \ldots, n \}$,
$n > 0$. In this chain,
\[
\max \left( \EXP \{ T_{0,n} \} , \EXP \{ T_{0,-n} \} \right)
 \ge  {2 \over 3} \, n^2.
\]
\end{corollary}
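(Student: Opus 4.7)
My plan is to assume for contradiction that both $\EXP\{T_{0,n}\}$ and $\EXP\{T_{0,-n}\}$ are strictly less than $\tfrac{2}{3} n^2$, and derive a contradiction by combining two sub-chain applications of Lemma~\ref{lemma:backandforth} with a first-exit identity. Throughout, write $a = \EXP\{T_{0, n}\}$, $b = \EXP\{T_{0, -n}\}$, $c = \EXP\{T_{n, 0}\}$, and $d = \EXP\{T_{-n, 0}\}$.

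First I would apply Lemma~\ref{lemma:backandforth} to the sub-chain on $\{0, 1, \ldots, n\}$ (with reflecting barriers at $0$ and $n$, inheriting the interior probabilities from the original chain), and symmetrically to $\{-n, \ldots, 0\}$. Each sub-chain has $n+1$ states, so the lemma gives round-trip bounds of $2n^2$ apiece. The hitting time $c$ coincides with the upper sub-chain's hitting time from $n$ to $0$, since a walk from $n$ in the full chain cannot slip below $0$ before first hitting it; and $a$ is at least the upper sub-chain's hitting time from $0$ to $n$, because a walk that is allowed to detour into the negative half can only take longer. Hence $a + c \ge 2n^2$ and $b + d \ge 2n^2$, which under our contradiction hypothesis force $c > \tfrac{4}{3} n^2$ and $d > \tfrac{4}{3} n^2$.

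Next I would perform a first-exit analysis. Let $\tau = T_{0, \{-n, n\}}$ and $p = \Pr[X_\tau = n]$, $q = 1 - p$. Conditioning on the value of $X_\tau$ and applying the strong Markov property, together with $\EXP\{T_{-n, n}\} = d + a$, gives $a = \EXP[\tau] + q(d + a)$, i.e.\ $pa = \EXP[\tau] + qd$; and symmetrically $qb = \EXP[\tau] + pc$. Adding the two identities yields
\[
pa + qb \;=\; 2\,\EXP[\tau] + pc + qd \;\ge\; pc + qd \;>\; \tfrac{4}{3} n^2.
\]
But $pa + qb$ is a convex combination of $a$ and $b$, so $pa + qb \le \max(a, b) < \tfrac{2}{3} n^2$, contradicting the inequality just displayed.

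The step I expect to be the main obstacle is the sub-chain comparison $a \ge a^U$, where $a^U$ denotes the upper sub-chain's hitting time from $0$ to $n$. The sibling identity $c = c^U$ is self-evident, but $a \ge a^U$ requires a genuine argument: either a direct coupling of the two walks on their common positive part (treating the original walk's excursions into the negative half as pure added cost), or a computation with the birth-death hitting-time formulas underlying the proof of Lemma~\ref{lemma:backandforth}, which expresses $a - a^U$ as a manifestly non-negative sum weighted by the stationary mass on $\{-n, \ldots, -1\}$.
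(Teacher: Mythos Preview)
Your argument is correct and shares the paper's overall skeleton: assume the bound fails, apply Lemma~\ref{lemma:backandforth} to the two half-chains $\{0,\ldots,n\}$ and $\{-n,\ldots,0\}$ to force $\EXP\{T_{n,0}\}$ and $\EXP\{T_{-n,0}\}$ both above $\tfrac{4}{3}n^2$, then extract a contradiction via a strong-Markov decomposition at the first endpoint reached. The paper also flags, and dispatches in one line, the sub-chain domination issue you single out as the main obstacle (``$0$ is not a reflecting barrier, but this makes $\EXP\{T_{0,n}\}$ only larger''). Where you diverge is in the final step: the paper passes through the cover time $T=\max(T_{0,n},T_{0,-n})=T_{0,S}+T_{S,-S}$ and the inequality $a+b\ge \EXP\{T\}\ge \EXP\{T_{S,-S}\}>\tfrac{4}{3}n^2$, concluding $\max(a,b)\ge\tfrac{1}{2}(a+b)>\tfrac{2}{3}n^2$; you instead write down the exact first-exit identities $pa=\EXP[\tau]+qd$ and $qb=\EXP[\tau]+pc$ and add them, which yields the sharper contradiction $\max(a,b)\ge pa+qb>\tfrac{4}{3}n^2$. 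Your route is a touch more direct and loses less in the final inequality, while the paper's cover-time phrasing is perhaps more conceptual; either way the essential ideas coincide.
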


\begin{proof}
We prove this by contradiction. Set $c = 2/3$. Assume that
\[
\max \left( \EXP \{ T_{0,n} \} , \EXP \{ T_{0,-n} \} \right) <  cn^2.
\]
By Theorem 1,
\[
 \EXP \{ T_{0,n} \}  +  \EXP \{ T_{n,0} \}  \ge 2n^2,
\]
and
\[
 \EXP \{ T_{0,-n} \}  +  \EXP \{ T_{-n,0} \}  \ge 2n^2.
\]
Observe for this that $0$ is not a reflecting barrier, but this makes 
$ \EXP \{ T_{0,n} \}$ only larger, so Theorem 1 does indeed apply.
By our assumption, we thus have
\[
\min \left( \EXP \{ T_{n,0} \} ,  \EXP \{ T_{-n,0} \} \right) >  (2-c) n^2.
\]
Let $T$ be the cover time, i.e., the time to visit all states starting from state $0$. It is easy to see that
\[
T_{0,n} + T_{0,-n} > T = \max \left( T_{0,n} , T_{0,-n} \right)
  =  T_{0,S} + T_{S,-S},
\]
where $S \in \{ n, -n \}$ is the first of the two end states
reached by the Markov chain. If we condition on the history up to $T_{0,S}$,
we see that
\[
\EXP \{ T_{S,-S} \} \ge \min \left( \EXP \{ T_{n,0} \} ,  \EXP \{ T_{-n,0} \} \right) >  (2-c) n^2.
\]
Thus,
\[
\max \left( \EXP \{ T_{0,n} \} , \EXP \{ T_{0,-n} \} \right)
 \ge {1 \over 2} \, \left( \EXP \{ T_{0,n} \} + \EXP \{ T_{0,-n} \} \right)
 \ge {1 \over 2} \, \EXP \left\{ \max \left( T_{0,n} , T_{0,-n} \right) \right\} 
= {1 \over 2} \, \EXP \{ T \}
>  (1- c/2 ) \,  n^2 ,
\]
which contradicts our assumption.
\end{proof}


\subsection{The Lower Bound}
\label{sec:bound}

Let $\mathcal{A}$ be a randomized memoryless routing algorithm.  Let $k$
be an even integer, let $t$ be the origin, and let $A=a_1,\ldots a_k$ be
a path of $k$ collinear vertices such that $a_{k}$ is closer to $t$ than
any of $a_1,\ldots,a_{k-1}$ and the three points $a_{1},a_{k},t$ make a left turn with $\angle
a_{1}a_{k}t$ greater than $150^\circ$ degrees but less than $180^\circ$ (see Figure~\ref{fig:ab}.a). Let
$B=b_1,\ldots,b_k$ be the reflection of $A$ through the line parallel
to $A$ that contains $t$ (see Figure~\ref{fig:ab}.b).  Let $A(\alpha)$,
respectively, $B(\alpha)$, denote the path $A$, respectively, $B$,
rotated by an angle of $\alpha$ about the origin, $t$.

\begin{figure}
  \begin{center}
    \begin{tabular}{cc}
      \includegraphics{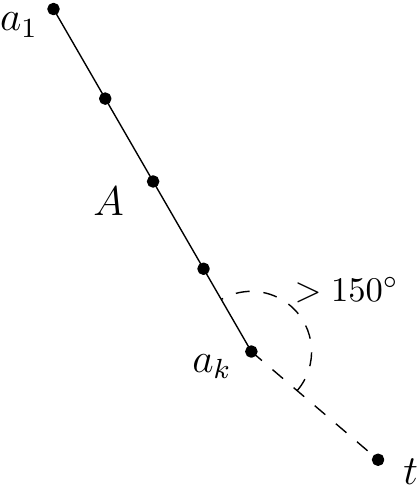} &
      \includegraphics{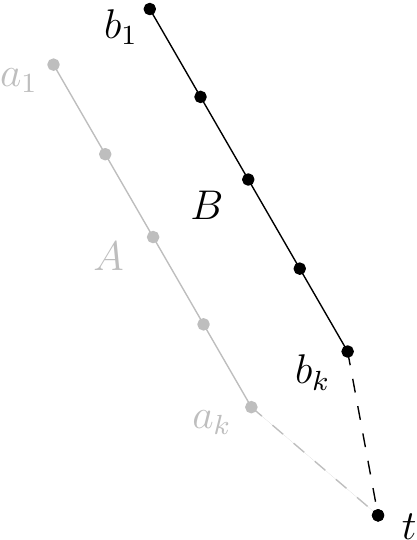} \\
      (a) & (b)
    \end{tabular}
  \end{center}
  \caption{The chains $A$ and $B$.}
  \label{fig:ab}
\end{figure}

Define the \emph{color} of a path $A(\alpha)=a_1',\ldots,a_k'$ as
follows: Imagine running $\mathcal{A}$ on the graph consisting of $A'$
and the isolated vertex $t$, starting at $a_{k/2}'$.  If $\mathcal{A}$
takes $\Omega (k^{2})$ expected time to reach $a_k'$ then color
$A(\alpha)$ \emph{blue}, otherwise color $A(\alpha)$ \emph{red}.
Note that Corollary~\ref{cor:markov} implies that, if $A(\alpha)$ is red,
then $\mathcal{A}$ takes $\Omega(k^2)$ expected time to reach $a_1$
starting at $a_{k/2}$.

Intuitively, a path is red (getting hotter --- closer to $t$) if
$\mathcal{A}$ could move quickly from $a_{k/2}$ to $a_k$.  A path is blue (getting cooler --- further from $t$) if $\mathcal{A}$ could move quickly to $a_1$.  Define the color (red or blue) of a path $B(\alpha)$ in the same way.

\begin{lemma}\label{lem:2blue}
If there exists $\alpha$ such that $A(\alpha)$ and $B(\alpha)$ are both blue, then there exists a convex subdivision $G=(V,E)$ with $|V|=2k+1$ with vertices $s,t\in V$ such that $\mathcal{A}$ takes $\Omega(k^2)$ steps when routing from $s$ to $t$.  
\end{lemma}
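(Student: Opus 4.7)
The plan is to build $G$ on the vertex set $V = A(\alpha) \cup B(\alpha) \cup \{t\}$ by adjoining to the path edges of $A(\alpha)$ and $B(\alpha)$ the three edges $a_1b_1$, $a_kt$, and $b_kt$, giving $|V| = 2k+1$. I would first check geometrically, using the hypothesis $\angle a_1 a_k t > 150^\circ$ and the fact that $B$ is the reflection of $A$ through the line parallel to $A$ through $t$, that the $(2k+1)$-cycle $t, a_k, a_{k-1}, \ldots, a_1, b_1, \ldots, b_k$ bounds a convex polygon: the nearly-parallel orientations of $A$ and $B$ with respect to the reflection axis keep the interior angle at $t$ strictly below $180^\circ$ and the other interior angles are bounded by $180^\circ$ either trivially (collinear vertices) or by the same geometric hypothesis. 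The resulting $G$ is then a valid convex subdivision with a single bounded face. I would choose $s = a_{k/2}$.

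The analysis rests on the structural observation that at every interior vertex $v$ of $A$ (and symmetrically of $B$), the tuple $(v,t,N(v))$ passed to $\mathcal{A}$ is identical to the tuple in the isolated graph used to color $A(\alpha)$. Hence the walk's transition probabilities on $\{a_2,\ldots,a_{k-1}\}$ coincide exactly with those of the isolated $A$-Markov chain, and likewise for $B$; only the behaviour at the four junction vertices $a_1, a_k, b_1, b_k$ can differ. Since $a_k$ and $b_k$ are the only neighbours of $t$, I would lower-bound $\EXP\{T_{s,t}\}$ by the expected time $\tau$ for the walk to first visit $\{a_k, b_k\}$.

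To show $\EXP\{\tau\} = \Omega(k^2)$ I would project the walk onto $A$: the subsequence of times the walk sits in $A$ forms a Markov chain on $\{a_1,\ldots,a_k\}$ whose interior transitions coincide with the isolated chain's, and whose behaviour at $a_1$ may be (i) a step to $a_2$, (ii) a self-loop (a $B$-excursion returning to $a_1$), or (iii) an absorption (a $B$-excursion reaching $b_k$ and hence $t$). I would then split on whether the walk first visits $a_k$ or $b_k$. On the event that $a_k$ is hit first, the projected walk reaches $a_k$; since introducing self-loops at a non-target state cannot decrease hitting times, the expected number of projected steps is at least $\EXP\{T^A_{a_{k/2},a_k}\}$ in the isolated chain, which by blueness of $A$ is $\Omega(k^2)$, and each projected step contributes at least one real step. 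On the complementary event the final $B$-excursion begins at $b_1$ and reaches $b_k$ without returning to $a_1$; since any walk along the path $B$ from $b_1$ to $b_k$ must traverse $b_{k/2}$, its expected length is at least $\EXP\{T^B_{b_{k/2},b_k}\} = \Omega(k^2)$ by blueness of $B$. Combining the two conditional contributions gives $\EXP\{\tau\} = \Omega(k^2)$.

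The main technical obstacle I anticipate is making the case split rigorous: conditioning on ``hit $a_k$ first'' could, via Doob's $h$-transform, bias the projected chain toward $a_k$ and potentially shrink the conditional hitting time relative to the isolated bound. I expect to handle this either by a coupling that keeps the conditioned projected walk slower step-by-step than the isolated chain, or by reorganising the computation to avoid conditioning entirely: count excursions from $\{a_1, b_1\}$ into $A$ or $B$ and argue that each excursion has expected cost $\Omega(k)$ (by applying Corollary~\ref{cor:markov} to the relevant sub-path), while $\Omega(k)$ such excursions are required before one succeeds in reaching $a_k$ or $b_k$, again via blueness on the side in question.
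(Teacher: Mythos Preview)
Your construction --- the two chains together with the edges $a_1'b_1'$, $a_k't$, $b_k't$, and the choice $s=a_{k/2}'$ --- is exactly the paper's, and your high-level reasoning (interior neighbourhoods coincide with those in the isolated chains; to reach $t$ one must first hit $a_k$ or $b_k$) is the same as well. The paper's own proof of this lemma is a single sentence asserting the $\Omega(k^2)$ bound without any further justification, so everything beyond your first paragraph already goes well past what the paper supplies.

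There is, however, a real error in your case analysis that is more than the conditioning worry you flag. In the ``$b_k$ first'' branch you assert that the final $B$-excursion, because it passes through $b_{k/2}$, has expected length at least $\EXP\{T^B_{b_{k/2},b_k}\}$. This is false: that excursion is conditioned on reaching $b_k$ without returning to $a_1$, and the Doob $h$-transform can drastically accelerate it. Take $q_i=\tfrac{1}{10}$ for every interior $i$ (strong drift toward $b_1$); then $B$ is blue with $\EXP\{T^B_{b_{k/2},b_k}\}$ exponential in $k$, yet the conditioned excursion acquires drift toward $b_k$ and has expected length only $O(k)$. The same example defeats your excursion-counting fallback: with this drift each unconditioned $B$-excursion has expected length $O(1)$, so ``each excursion has expected cost $\Omega(k)$'' is simply wrong. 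The $\Omega(k^2)$ (indeed, exponential) bound still holds in such cases, but for a different reason --- exponentially many excursions are required --- so neither of your two proposed routes closes the gap. A conditioning-free argument is needed; one option is to treat the walk prior to hitting $\{a_k,b_k\}$ as a single birth--death chain on the $2k$-vertex path $b_k,\ldots,b_1,a_1,\ldots,a_k$ and bound the absorption time directly using the explicit one-step formulae from the proof of Lemma~\ref{lemma:backandforth}, where blueness of $A$ and of $B$ each controls one half of the path.
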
 

\begin{proof}
Let $A'=A(\alpha)=a_1',\ldots,a_k'$ and $B'=B(\alpha)=b_1',\ldots,b_k'$.  The convex subdivision $G$ consists of $A'$ and $B'$ as well as the edges $a_1'b_1'$, $a_k't$ and $b_k't$ (see Figure~\ref{fig:2blue}).  Since $A(\alpha)$ and $B(\alpha)$ both blue, applying $\mathcal{A}$ to route from $a_{k/2}'$ to $t$ will require $\Omega(k^2)$ expected steps.
\end{proof}

\begin{figure}
  \centering
  \includegraphics{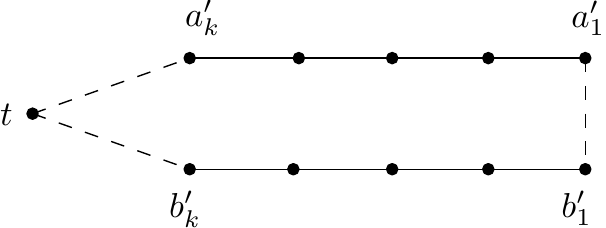}
  \caption{Two blue chains $A(\alpha)$ and $B(\alpha)$.}
  \label{fig:2blue}
\end{figure}

\begin{lemma}\label{lem:2red}
If there exists $\alpha$ such that $A(\alpha)$ and $B(180+\alpha)$ are both red, then there exists a convex subdivision $G=(V,E)$ with $|V|=2k+1$ with vertices $s,t\in V$ such that $\mathcal{A}$ takes $\Omega(k^2)$ steps when routing from $s$ to $t$.  
\end{lemma}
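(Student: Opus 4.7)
The plan is to mirror the construction of Lemma~\ref{lem:2blue} with the roles of inner and outer endpoints interchanged, as suggested by the red hypothesis. The key geometric observation I would exploit is that since $B$ is the reflection of $A$ through the line $\ell_t$ through $t$ parallel to $A$, the rotated path $B(180)$ lies on the same line as $A$, with the inner endpoint $b_k$ on the opposite side of the foot of the perpendicular from $t$ to that line. Writing $A' := A(\alpha)$ and $B' := B(180+\alpha)$, both paths thus lie on a common line, with $a_k', b_k'$ straddling the perpendicular's foot. I would then take $G$ with $V = A' \cup B' \cup \{t\}$ (so $|V| = 2k+1$) and edge set $E(A') \cup E(B') \cup \{\,a_k'b_k',\, a_1't,\, b_1't\,\}$. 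The single bounded face is the polygon $t, a_1', \dots, a_k', b_k', \dots, b_1'$, a weakly convex ``wedge'' with $t$ as apex; a small generic perturbation of the vertex coordinates makes it strictly convex without altering the colorings, which depend only on strict inequalities between expected hitting times. Set $s := a_k'$.

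Since $t$'s only neighbors are $a_1'$ and $b_1'$, the walker must first visit $\{a_1', b_1'\}$ in order to reach $t$. Every interior vertex of $A'$ (resp.\ $B'$) has the same two-vertex neighborhood as in the isolated coloring experiment, so $\mathcal{A}$'s transition probabilities on the $A'$- and $B'$-interiors agree with those of the isolated reflecting chains. I would then project the walker's trajectory onto $A'$ (the ``A-chain''): at interior vertices it transitions exactly as the isolated reflecting $A'$-chain, while at $a_k'$ the A-chain either moves directly to $a_{k-1}'$, self-loops (after a round-trip excursion into $B'$ that returns to $a_k'$), or terminates (because such an excursion was absorbed at $b_1'$). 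Self-loops only increase the number of A-steps the walker takes, so conditional on ever reaching $a_1'$, the expected A-chain length is at least the expected hitting time of $a_1'$ from $a_k'$ in the isolated reflecting $A'$-chain. By red-ness of $A$, Corollary~\ref{cor:markov}, and the strong Markov property applied at $a_{k/2}'$,
\[
T^{\mathrm{ref},A}_{k,1} \;\ge\; T^{\mathrm{ref},A}_{k/2,1} \;=\; \Omega(k^2),
\]
and symmetrically the expected B-chain length, conditional on absorption at $b_1'$, is $\Omega(k^2)$.

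The main obstacle will be combining these one-sided bounds into an unconditional lower bound on $\EXP[T_t]$. I would split on the events $E_A = \{\text{absorbed at }a_1'\}$ and $E_B = \{\text{absorbed at }b_1'\}$ and show that both conditional expectations $\EXP[T_t \mid E_A]$ and $\EXP[T_t \mid E_B]$ are $\Omega(k^2)$. This requires arguing that conditioning the A-chain's hitting time on $E_A$ (equivalently, on the independent B-chain not finishing first) does not reduce its expectation by more than a constant factor. The A- and B-chains are driven by disjoint fresh random bits that $\mathcal{A}$ reads and are therefore independent, and reflecting-chain hitting times on a path have tails that are at most essentially exponential at the scale of their means, so the conditional comparison goes through. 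The law of total expectation then yields $\EXP[T_t] = \Omega(k^2)$, completing the proof.
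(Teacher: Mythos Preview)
Your graph $G$ is exactly the paper's: the two chains joined by the edge $a_k'b_k'$, with $t$ adjacent only to $a_1'$ and $b_1'$. The paper, however, takes $s=a_{k/2}'$ (not $a_k'$) and its entire proof is one sentence: ``Since $A(\alpha)$ and $B(180+\alpha)$ are red, applying $\mathcal{A}$ to route from $a_{k/2}'$ to $t$ will require $\Omega(k^{2})$ expected steps.'' There is no conditioning on which endpoint is hit first, no tail estimate, and no perturbation step; the paper simply relies on the reader to see that a packet sitting at $a_{k/2}'$ cannot reach $t$ without traversing one of the two red chains against its fast direction.

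Your attempt to spell this out introduces a genuine gap. You assert that, conditional on $E_A$ (absorption at $a_1'$), the expected $A$-chain length is at least $\EXP[T^{\mathrm{ref},A}_{k,1}]$. This is not true in general: conditioning on eventually hitting $a_1'$ is a Doob $h$-transform that tilts \emph{every} interior transition of the $A$-chain toward $a_1'$, and the conditioned hitting time can be strictly smaller than the unconditioned one. Concretely, $E_A$ is the event that none of the $M$ visits the reflecting $A$-walk makes to $a_k'$ triggers absorption on the $B$ side, so $\PROB(E_A\mid M)=(1-r)^{M}$ for some $r$; since the walk length $L_A$ is positively correlated with $M$, one gets $\EXP[L_A\mid E_A]\le\EXP[L_A]$, the reverse of what you need. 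The closing appeal to ``independence'' and ``essentially exponential tails'' does not repair this: the $A$-walk and the event $E_A$ are \emph{not} independent (they are coupled through $M$), and red chains can be strongly biased toward $a_k'$, giving $L_A$ a heavy tail at scale $k^{2}$. The symmetric objection applies to the $E_B$ branch, so the law-of-total-expectation step does not close as written.
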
 

\begin{proof}
Let $A'=A(\alpha)=a_1',\ldots,a_k'$ and $B'=B(180+\alpha)=b_1',\ldots,b_k'$.  The convex subdivision $G$ consists of $A'$ and $B'$ as well as the edges $a_k'b_k'$, $a_1't$ and $b_1't$ (see Figure~\ref{fig:2red}).  Since $A(\alpha)$ and $B(180+\alpha)$ are red, applying $\mathcal{A}$ to route from $a_{k/2}'$ to $t$ will require $\Omega(k^2)$ expected steps.
\end{proof}

\begin{figure}
  \centering
  \includegraphics{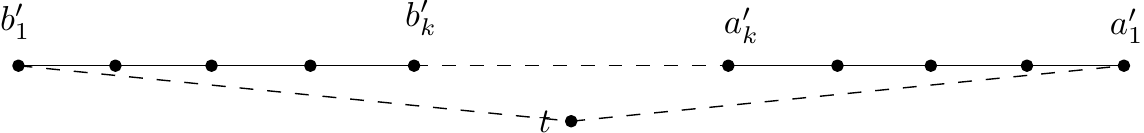}
  \caption{Two red chains $A(\alpha)$ and $B(180+\alpha)$.}
  \label{fig:2red}
\end{figure}

\begin{theorem}
  \label{thm:quadratic}
  For any integer $k> 0$ and any 
  memoryless routing algorithm $\mathcal{A}$,
  there exists a convex subdivision $G=(V,E)$ with $|V|=\Theta(k)$ having vertices $s,t\in V$ such that $\mathcal{A}$ takes $\Omega(k^2)$ steps when routing from $s$ to $t$.
\end{theorem}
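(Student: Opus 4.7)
My plan is to combine Lemma~\ref{lem:2blue} and Lemma~\ref{lem:2red} through a case analysis on the rotation angle $\alpha$. For each $\alpha \in [0,2\pi)$ let $a(\alpha), b(\alpha) \in \{\text{red},\text{blue}\}$ denote the colors assigned to $A(\alpha)$ and $B(\alpha)$ by the definition preceding Lemma~\ref{lem:2blue}. If some $\alpha$ satisfies $a(\alpha)=b(\alpha)=\text{blue}$, I invoke Lemma~\ref{lem:2blue}; if some $\alpha$ satisfies $a(\alpha)=b(\alpha+\pi)=\text{red}$, I invoke Lemma~\ref{lem:2red}. Either invocation produces a convex subdivision of size $\Theta(k)$ on which $\mathcal{A}$ takes $\Omega(k^2)$ expected steps between some pair $s,t$ of vertices, which is exactly what the theorem asserts.

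The crucial step is to show that at least one of these two hypotheses holds for \emph{some} $\alpha$. I would argue by contradiction: assume that for every $\alpha$, both (i) at least one of $a(\alpha), b(\alpha)$ is red and (ii) at least one of $a(\alpha), b(\alpha+\pi)$ is blue. Chasing these implications around the circle — if $a(\alpha)=R$ then (ii) gives $b(\alpha+\pi)=B$, and then (i) at $\alpha+\pi$ gives $a(\alpha+\pi)=R$; and similarly the case $a(\alpha)=B$ unwinds — forces $a(\alpha)\neq b(\alpha)$ at every $\alpha$, and makes both colorings $\pi$-periodic in $\alpha$.

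To close the argument in this rigid residual case, I would exploit the geometric coincidence that rotation by $\pi$ about $t$ carries the line on which $A$ lives onto the line on which $B$ lives, and vice versa. Hence for any $\alpha$ with $a(\alpha)=\text{blue}$, by $\pi$-periodicity $a(\alpha+\pi)=\text{blue}$ too, and $A(\alpha+\pi)$ lies on the line through $t$ carrying $B(\alpha)$, on the opposite side of $\ell_t$ from $A(\alpha)$. This places $A(\alpha)$ and $A(\alpha+\pi)$ in exactly the geometric relationship that $A$ and $B$ occupy in Lemma~\ref{lem:2blue}, so the same convex subdivision recipe (with analogous near- and far-end connecting edges) and the same blue-blue Markov-chain hitting-time argument apply to yield the $\Omega(k^2)$ bound, contradicting the bad-case assumption. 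The symmetric case where $b(\alpha)$ is blue for some $\alpha$ is handled identically using $B(\alpha)$ and $B(\alpha+\pi)$.

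The main obstacle is verifying this auxiliary construction cleanly. Because $a_1(\alpha+\pi)=-a_1(\alpha)$ through the origin $t$, the naive far-end edge $a_1(\alpha)a_1(\alpha+\pi)$ passes through $t$ and would make the polygon degenerate; one must instead close the polygon either by connecting a near endpoint of one chain to the far endpoint of the other, or by using the slack provided by the $>150^\circ$ constraint on $\angle a_1 a_k t$ to displace the far endpoints off the antipodal line. Once that is arranged, convexity follows from the same calculation that underpins Lemma~\ref{lem:2blue}, and the slow-routing conclusion is immediate from Corollary~\ref{cor:markov} applied to the interior of each $\pi$-rotated pair of chains.
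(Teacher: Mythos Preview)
Your case analysis and the implication-chasing are correct and match the paper: in the residual case you correctly deduce that for every $\alpha$ at least one of $A(\alpha),B(\alpha)$ is blue (indeed exactly one, with $\pi$-periodicity). The gap is in the construction you propose after that.

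The claim that $A(\alpha)$ and $A(\alpha+\pi)$ sit ``in exactly the geometric relationship that $A$ and $B$ occupy in Lemma~\ref{lem:2blue}'' is false. The pair $A,B$ are related by reflection in a line through $t$; the pair $A(\alpha),A(\alpha+\pi)$ are related by the point reflection $x\mapsto -x$ about $t$. Concretely, with $t=0$, $A$ on the line $y=h$ and $B$ on $y=-h$, one has $a_i(\pi)=(-x_i,-h)$ whereas $b_i=(x_i,-h)$: same line, opposite orientation. Consequently \emph{both} the far-end segment $a_1(\alpha)a_1(\alpha+\pi)$ and the near-end segment $a_k(\alpha)a_k(\alpha+\pi)$ pass through $t$, so there is no way to close off a convex subdivision in which $t$ is accessible only through the near ends $a_k$. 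Your proposed fixes do not rescue this: ``displacing the far endpoints'' changes the chain and hence potentially its color (the colors are defined for the specific rotated copies of the fixed chain $A$), and connecting near-to-far endpoints across the two antipodal chains produces crossings or non-convex faces.

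The paper finishes differently. Having established that at each angle at least one of $A(\alpha),B(\alpha)$ is blue, it takes \emph{three} angles $0^\circ,120^\circ,240^\circ$, picks the blue chain $X\in\{A(0),B(0)\}$, $Y\in\{A(120),B(120)\}$, $Z\in\{A(240),B(240)\}$ at each, and joins them by the edges $x_1y_1,y_1z_1,z_1x_1$ (outer triangle), $x_ky_k,y_kz_k,z_kx_k$ and $x_kt,y_kt,z_kt$. The $150^\circ$--$180^\circ$ constraint on $\angle a_1a_kt$ is exactly what makes this three-arm figure a convex subdivision regardless of which of $A,B$ is chosen at each angle. Since all three chains are blue, routing from $x_{k/2}$ to $t$ takes $\Omega(k^2)$ expected steps. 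Replacing your two-antipodal-chain attempt by this three-chain construction closes the gap.
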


\begin{proof}
If either of Lemma~\ref{lem:2blue} or Lemma~\ref{lem:2red} apply to $\mathcal{A}$ then the proof is complete.  Otherwise, observe that the exclusion of these two lemmata implies that, for any $\alpha$, at least one of $A(\alpha)$ and $B(\alpha)$ is blue.  To see this, note that if $A(\alpha)$ is red, then (the exclusion of) Lemma~\ref{lem:2red} implies that $B(\alpha+180)$ is blue, so (the exclusion of) Lemma~\ref{lem:2blue} implies that $A(\alpha+180)$ is red, so (the exclusion of) Lemma~\ref{lem:2red} implies that $B(\alpha)$ is blue.

Therefore, there exists 3 blue chains $X=x_1,\ldots,x_k$, $Y=y_1,\ldots,y_k$, and $Z=z_1,\ldots,z_k$ where $X\in\{A(0),B(0)\}$, $Y\in\{A(120),B(120)\}$ and $Z\in\{A(240),B(240)\}$.
We can then take $G$ to be the graph containing $X$, $Y$, and $Z$,
as well as the edges $x_1y_1$, $y_1z_1$, $z_1x_1$, $x_ky_k$, $y_kz_k$,
$z_kx_k$, $x_kt$, $y_kt$, $z_kt$ (see Figure~\ref{fig:3blue}).  Because
$X$, $Y$, and $Z$ are all blue, the expected number of steps required
to route from $x_{k/2}$ to $t$ using $\mathcal{A}$ is $\Omega(k^2)$.

\begin{figure}[ht]
  \centering
  \includegraphics{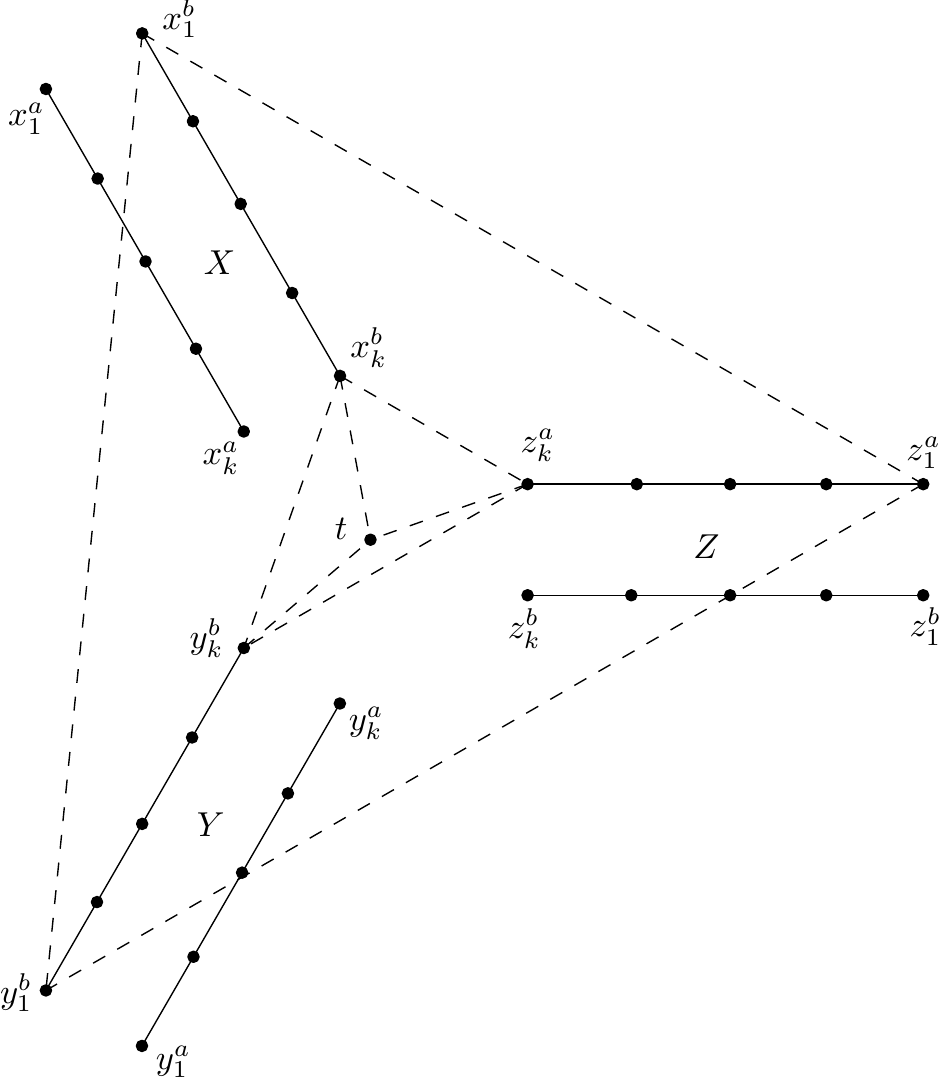}
  \caption{Three blue chains}
  \label{fig:3blue}
\end{figure}

All that remains is to verify that $G$ is indeed a convex subdivision.  This is
readily established using the fact that the angles $\angle x_1 x_k t$, $\angle y_1 y_k t$, and $\angle z_1 z_k t$, are all between 150 and 180 degrees.
%
%
%
%
%
%
\end{proof}

\section{Conclusions}

We have shown that the \textsc{Random-Compass} algorithm takes exponential
expected time to route on some convex subdivisions and that any randomized
memoryless routing algorithm takes at least quadratic time to route on some
convex subdivisions.  We conclude with two open problems:

\begin{op}
The current upper bound for the expected time required by \textsc{Random-Compass} on convex subdivisions is $O(2^n)$ and the lower bound is $\Omega(2^{n/3})$.  Close this gap.
\end{op}

\begin{op}
A random walk on $G$ routes any message in $O(n^2)$ expected time but requires $O(\log d)$ random bits when located at a vertex of degree $d$.  Is there a randomized memoryless routing algorithm for routing on convex subdivisions that uses $O(1)$ random bits per step and that routes any message in $O(n^2)$ expected time?
\end{op}

\bibliographystyle{plain}
\bibliography{convobl}
\end{document}